\newtheorem{theorem}{Theorem}
\newtheorem{example}{Example}
\newtheorem{corollary}{Corollary}
\newtheorem{remark}{Remark}
\theoremstyle{definition}
\renewcommand{\qed}{\hfill $\blacksquare$}
\newenvironment{sproof}{\noindent{ \emph{ Sketch of proof:}}}{\qed\bigskip}
\begin{document}
	\newgeometry{left=0.7in,right=0.7in,top=0.7484252in,bottom=0.944882in}
    \title{On Information Theoretic Fairness: Compressed Representations With Perfect Demographic Parity} 
\vspace{-5mm}
\author{
		\IEEEauthorblockN{Amirreza Zamani, Borja Rodr{\'i}guez-G{\'a}lvez, and Mikael Skoglund \vspace*{0.5em}
			\IEEEauthorblockA{\\
                              Division of Information Science and Engineering, KTH Royal Institute of Technology \\
				Email: \protect amizam@kth.se, borjarg@kth.se, skoglund@kth.se }}
		}
	\maketitle

\begin{abstract}
    In this article, we study the fundamental limits in the design of fair and/or private representations achieving perfect demographic parity and/or perfect privacy through the lens of information theory. More precisely, given some useful data $X$ that we wish to employ to solve a task $T$, we consider the design of a representation $Y$ that has no information of some sensitive attribute or secret $S$, that is, such that $I(Y;S) = 0$.
    We consider two scenarios. First, we consider a design desiderata where we want to maximize the information $I(Y;T)$ that the representation contains about the task, while constraining the level of compression (or encoding rate), that is, ensuring that $I(Y;X) \leq r$. Second, inspired by the Conditional Fairness Bottleneck problem, we consider a design desiderata where we want to maximize the information $I(Y;T|S)$ that the representation contains about the task which is not shared by the sensitive attribute or secret, while constraining the amount of irrelevant information, that is, ensuring that $I(Y;X|T,S) \leq r$. 
    In both cases, we employ extended versions of the Functional Representation Lemma and the Strong Functional Representation Lemma and study the tightness of the obtained bounds. Every result here can also be interpreted as a coding with perfect privacy problem by considering the sensitive attribute as a secret.

\end{abstract}
\section{Introduction}
In this paper, we consider the common scenario illustrated in~\Cref{fig:ISITsys}, where we wish to employ some available data $X$ to make some decision or draw inferences about a certain task $T$. However, both the data and the task contain information about some sensitive attribute or secret $S$. For instance, the data $X$ could reflect a person's record in the census, the task $T$ if this person should be granted a loan or not, and the sensitive attribute $S$ the person's identity or their membership to a minority group. 

\looseness=-1 In this scenario, it is important to ensure that the decisions taken are not unfair and/or that the inferences made do not cause a privacy breach. One way to deal with this issue is to design a \emph{private} or \emph{fair representation} $Y$ of the data, that is, representations that are informative of the task $T$ but are uninformative of the sensitive attribute or secret $S$~\cite{gun, vari,zhao2022, zhao2019,zemel, gronowski2023classification,hardt, king3, borz, khodam, Khodam22,kostala,makhdoumi, yamamoto1988rate, sankar, Calmon2}. As discussed in~\cite{vari}, the privacy and fairness problems are very similar. In particular, we will only consider the design of representations $Y$ that contain no information about the sensitive attribute or secret $S$, that is, such that $I(Y;S) = 0$. This has different meanings in different communities.
In the \emph{information-theoretic privacy} literature, the independence of the disclosed representation $Y$ and the secret $S$ is known as \emph{perfect privacy} or \emph{perfect secrecy}~\cite{borz, kostala2}.
In the \emph{algorithmic fairness} literature, the independence of the algorithm's output and the sensitive attribute is known as having a zero \emph{demographic (or statistical) parity gap} \cite{vari,zhao2022,zhao2019,zemel}. Here, we will refer to this condition as \emph{perfect demographic parity}.

\looseness=-1 Under this strong requirement on the fairness and/or privacy of the representations, we consider two problems that further trade off utility and compression or encoding rate:
\begin{enumerate}
    \item In order to guarantee that the representation $Y$ has as much predictive power as possible, we want to maximize the information it contains about the task $T$. Moreover, we impose a minimum level of compression $r$ to the data representation, that is, $I(Y;X) \leq r$. This optimization problem is described in~\eqref{eq:problem_1}.
    \item Inspired by the \emph{Conditional Fairness Bottleneck (CFB)}~\cite{vari}, we attempt at maximizing the information the representation $Y$ keeps about the task $T$ that is not shared by the private data $S$. Moreover, we impose a maximum amount of bits $r$ allocated to the irrelevant information about the data $X$ which is not shared by the private data $S$ and the task $T$, that is, $I(Y;X|S,T) \leq r$. This optimization problem is illustrated in~\Cref{fig:cfb-perfect} and described in~\eqref{eq:problem_2}.
\end{enumerate}

\begin{figure}[t]
    \centering
    \begin{tikzpicture}
        \draw[draw=black, fill=gray!5, rounded corners] (-2.8, -1.8) rectangle ++(6.2,4);
        \node at (2.9, -1.5) {agent};
     
        \node[draw, circle] (secret) at (0,1.2) {$S$};
        \node[draw, circle] (data) at (0,0) {$X$};
        \node[draw, circle] (task) at (0, -1.2) {$T$};
        \node[draw, rectangle, rounded corners] (joint) at (-2,0) {$P_{S,X,T}$};
        \node[draw, rectangle, rounded corners] (mechanism) at (2.5,0) {$P_{Y|S,X,T}$};
        \node[draw, circle] (representation) at (4.4,0) {$Y$};
        
        \node at (0, 1.8) {\begin{tabular}{c} \small{sensitive attribute} \\ \small{or secret} \end{tabular}};
        \node at (0.0, 0.55) {\small{data}};
        \node at (0, -0.7) {\small{task}};
        \node at (2.5, 0.8) {\begin{tabular}{c} \small{fairness} \\ {or privacy} \\ \small{mechanism} \end{tabular}};
        \node at (4.4, 0.65) {\begin{tabular}{c} \small{disclosed} \\ \small{representation} \end{tabular}};

        \draw[-stealth] (joint.east) -- (secret.west);
        \draw[-stealth] (joint.east) -- (data.west);
        \draw[-stealth] (joint.east) -- (task.west);

        \draw[-stealth] (data.east) -- (mechanism.west);
        \draw[dashed] (secret.east) -- (mechanism.west);
        \draw[dashed] (task.east) -- (mechanism.west);

        \draw[-stealth] (mechanism.east) -- (representation.west);
    \end{tikzpicture}
    \caption{\looseness=-1 Data representation with perfect demographic parity or privacy. We want to design a representation $Y$ of the data $X$ that is useful for the task $T$, is compressed, and independent of the sensitive attribute or secret $S$.}
    \label{fig:ISITsys}
\end{figure}
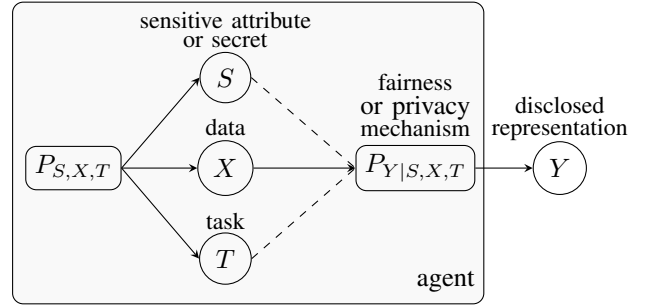

\begin{figure}[t]
	\centering
	\includegraphics[width=0.4\textwidth]{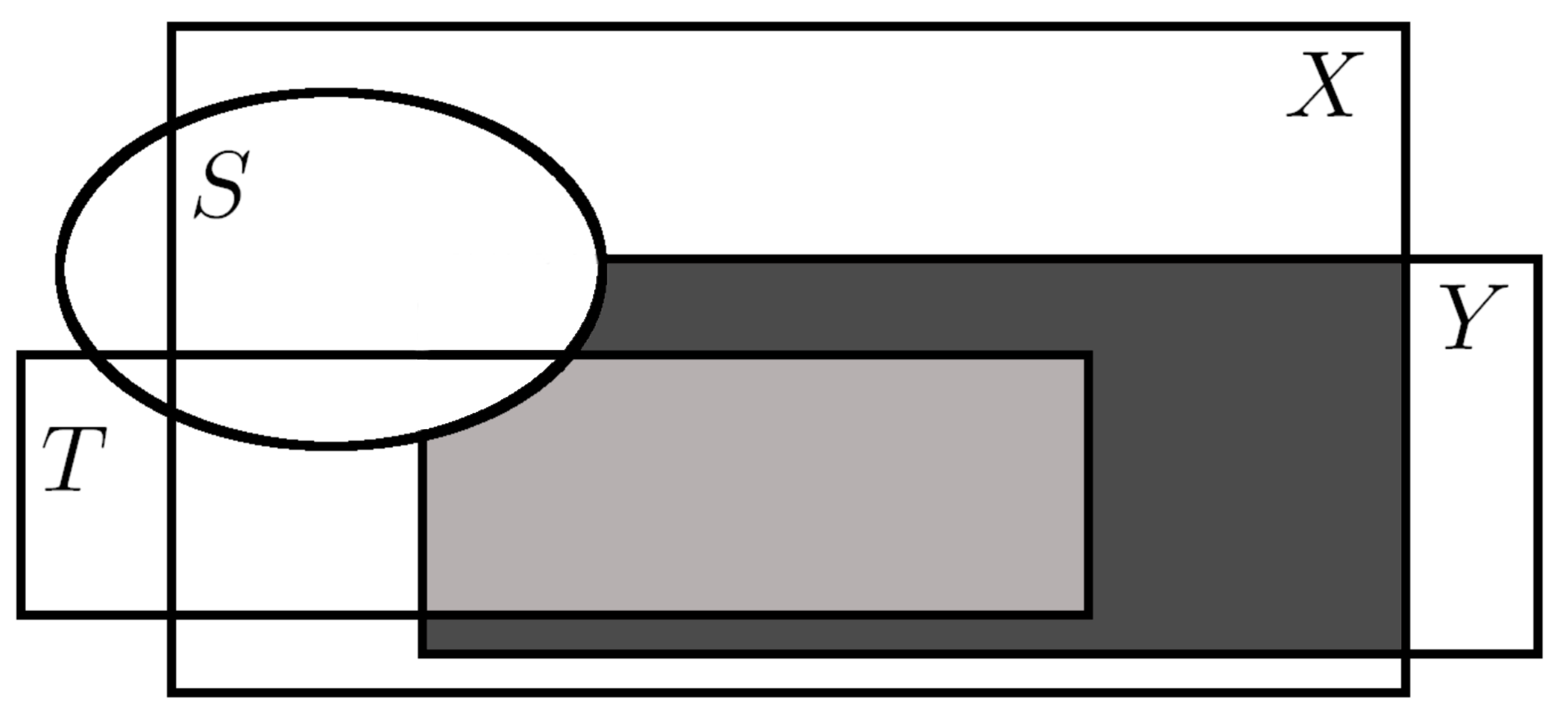}
	\caption{Information diagram~\cite{yeung1991new} of the CFB with perfect demographic parity. In light gray, we show the relevant information about the task $T$, not shared by the sensitive attribute $S$, that we want the representation $Y$ to maximize. In dark gray, we show the irrelevant information about the data $X$ that we want to constrain. Contrary to the standard CFB~\cite{vari}, we enforce that the representation contains no information about the sensitive attribute (perfect demographic parity or perfect privacy).}
	\label{fig:cfb-perfect}
\end{figure}

\subsection{Prior work on fairness mechanism design}

The notion of \emph{fair representations} was introduced by Zemel et al.~\cite{zemel}. This advanced the field of algorithmic fairness due to the expressiveness of deep learning and is mainly dominated by adversarial~\cite{zemel, edwards2016censoring, zhao2019} and variational~\cite{vari, creager2019flexibly, louizos2015variational, gupta2021controllable} approaches. The trade offs between utility and fairness were theoretically studied in~\cite{zhao2022}.

In \cite{vari}, the authors study the problem from an information-theoretic perspective. They introduce the CFB, which trades off the utility, fairness and \emph{compression} of the representations in terms of the mutual information. This extra desideratum has since then been also considered in subsequent work~\cite{gupta2021controllable, de2022funck, gronowski2023classification}.

\subsection{Prior work on privacy mechanism design}

In \cite{borz}, the authors study the trade-off between privacy and utility in terms of the mutual information. They show that the optimal privacy mechanism under perfect privacy can be obtained as the solution of a linear problem. In \cite{khodam, Khodam22}, this is subsequently generalized by relaxing the perfect privacy assumption (allowing some small bounded privacy leakage).
The concept of the \emph{Privacy Funnel} is introduced in~\cite{makhdoumi}, where the trade-off between privacy and utility considers the log-loss as the privacy and distortion metrics, ultimately leading again to the mutual information. An extension of this problem, the \emph{Conditional Privacy Funnel}, was introduced in~\cite{vari}. 

In \cite{Calmon2}, the authors studied the fundamental limits of the privacy and utility trade-off under an estimation-theoretic framework. In \cite{yamamoto, sankar}, the trade-off between privacy and utility are studied considering the equivocation as the privacy measure and the expected distortion as utility. In \cite{kostala}, the authors introduce the concept of \emph{secrecy by design} and apply it to privacy mechanism design and lossless compression problems. They find bounds on the trade-off between perfect privacy and utility using the \emph{Functional Representation Lemma} (FRL). In \cite{king3}, these results are extended to arbitrary privacy leakage. In \cite{kostala}, the authors also studied the average length of the encoded message, and in~\cite{kostala2}, they studied the relationships between the shared key, secrecy, and compression considering perfect secrecy, secrecy by design, maximal leakage, mutual information, and differential privacy for lossless compression.
 
\subsection{Contributions}

In this paper, we propose a simple and constructive theoretical approach to design fair representations with perfect demographic parity and/or private representations with perfect privacy. We consider two problems depending on the definition of the utility and compression constraint, which we outlined above.

\looseness=-1 The design of the mechanism yielding the representations is based on the extensions of the FRL and the Strong FRL (SFRL) derived in~\cite{king3}. In~\cite{king3}, the authors show that both lemmas are constructive and simple, helping them to obtain mechanism designs that can be optimal in special cases. We follow a similar path to \cite{king3} and \cite{kostala} to obtain simple mechanisms for fair and private representations. Finally, we compare the obtained designs in different scenarios and show that they are optimal in special cases.

\section{System model and Problem Formulation} 
\label{sec:system}
Let the sensitive attribute or secret $S \in \mathcal{S}$, data $X \in \mathcal{X}$, and target (task) $T \in \mathcal{T}$, be discrete random variables with joint distribution $P_{S,X,T} \in \mathbb{R}^{|\mathcal{S}|\times |\mathcal{X}| \times |\mathcal{T}|}$ and marginals $P_S \in \mathbb{R}^{|\mathcal{S}|}$, $P_X \in \mathbb{R}^{|\mathcal{X}|}$, and $P_T \in \mathbb{R}^{|\mathcal{T}|}$. Moreover, let also the data representation (or disclosed data) $Y \in \mathcal{Y}$ be a discrete random variable. As mentioned in the introduction, we seek to design a mechanism $P_{Y|S,X,T} \in \mathbb{R}^{|\mathcal{Y}| \times |\mathcal{S}|\times |\mathcal{X}| \times |\mathcal{T}|}$ that returns fair representations $Y$ with perfect demographic parity (or private representations with perfect privacy), that is, independent of the sensitive attribute or secret $S$.

\subsubsection{Problem 1} Here, the goal is to maximize the information the representation $Y$ has about the task $T$ while maintaining a minimum level of compression $r$. More precisely, the design problem is described by 
\begin{align}
    g^{r}(P_{S,X,T})&=\sup_{\begin{array}{c} 
	\substack{P_{Y|S,X,T}:I(Y;S)=0,\\ \ \ \ \ \ \ \ \  I(X;Y)\leq r }
	\end{array}}I(Y;T).
    \label{eq:problem_1}
\end{align} 
\looseness=-1 We consider the case where $r < H(X)$. Otherwise this leads to the privacy-utility trade-off problem studied in~\cite{kostala, king3}.

\begin{remark}
    \label{remark1}
	\normalfont
    \looseness=-1 The trade-off \eqref{eq:problem_1} is equivalent to the privacy-utility trade-off with a rate constraint studied in \cite{gun}, where they additionally incorporate the constraint that the Markov chain $(S,T)-X-Y$ holds. They show that when the rate constraint $r$ is sufficiently large ($r \geq H(X)$), this constraint is inactive, can be ignored, and the problem leads to a linear program. In contrast to \cite{gun}, in this paper, we obtain new mechanism designs with active rate constraints and without the Markov chain assumption. A simple scenario where this Markov chain naturally holds and can be ignored in the optimization is when $S$ and $T$ are deterministic functions of $X$.    
\end{remark}

\subsubsection{Problem 2} Here, we draw the utility and compression desiderata from the CFB~\cite{vari} with the additional requirement of perfect demographic parity of perfect privacy. The goal is to maximize the information the representation $Y$ has about the task $T$ which is not shared by the sensitive attribute or secret $S$ while keeping a bounded amount of irrelevant information. In this work, the term \emph{irrelevant} refers to the information about the data $X$ that does not reveal information about the task $T$ and the private data $S$. More precisely, the design problem is described by 
\begin{align}
g^{r}_c(P_{S,X,T})&=\sup_{\begin{array}{c} 
	\substack{P_{Y|S,X,T}:I(Y;S)=0,\\ \ \ \ \ \ \ \ \ \ I(X;Y|S,T)\leq r }
	\end{array}}I(Y;T|S).
    \label{eq:problem_2}
\end{align} 

\begin{remark}
	\normalfont
In \eqref{eq:problem_2}, in contrast to \cite{vari}, we do not require the Markov chains $T-X-Y$ and $S-X-Y$ and assume that $X$, $S$ and $T$ are arbitrarily correlated. As in~\Cref{remark1}, when $S$ and $T$ are deterministic functions of $X$ both Markov chains hold naturally and can be ignored in the optimization.
\end{remark}

\section{Main Results}
\label{sec:resul}
In this section, we investigate the fundamental limits of Problems 1 and 2. That is, we derive upper and lower bounds on the utilities of these problems under perfect demographic parity or perfect privacy and a bounded encoding rate, namely $g^{r}(P_{S,X,T})$ and $g^{r}_c(P_{S,X,T})$ defined in \eqref{eq:problem_1} and \eqref{eq:problem_2}. For this purpose, similarly to~\cite{king3}, we use the extended versions of the FRL and the SFRL proposed in~\cite[Lemmata 4 and 5]{king3}. These lemmata have a constructive proof, and thus describe a mechanism to construct the representation $Y$.

In order to bound $g^{r}(P_{S,X,T})$ and $g^{r}_c(P_{S,X,T})$, similarly to~\cite{kostala, king3}, we state an important expression for the utility $I(Y;T)$. Namely, we have that 
\begin{align}
I(Y;T)&=I(X,S,T;Y)-I(X,S;Y|T),\nonumber\\&=I(X,S;Y)+I(T;Y|X,S)-I(X,S;Y|T),\nonumber\\&=I(X,S;Y)+H(T|X,S)-H(T|Y,X,S)\nonumber\\ &\ \quad -I(X,S;Y|T).
\label{key}
\end{align}

\subsection{Bounds for Problem 1}
\label{subsec:main_p1}

Equipped with~\eqref{key}, we provide lower and upper bounds on $g^{r}(P_{S,X,T})$ in the next theorem.

\begin{theorem}
    \label{th.1}
	For every compression level $0\leq r\leq H(X|S)$ and random variables $(S, X, T) \sim P_{S,X,T}$, we have that
	\begin{align*}
	\max\{L_1^{r},L_2,L_3^{r}\}\leq g^{r}(P_{S,X,T})\leq H(T|S),
	\end{align*}
	where
	\begin{align*}
	L_1^{r} &= H(T|X,S)+r-H(X,S|T),\\
	L_2 &= H(T|X,S)-\left( \log(I(X,S;T)+1)+4 \right),\\
	L_3^{r} &= H(T|X,S)+r-\alpha H(X,S|T)-4\\&\!\!\!\!\!\!\!\! \ -\!\log((1\!-\!\alpha)I(X,S;T)\!+\!\alpha\min\{H(T),H(X,S)\}\!+\!1),
\end{align*}
and $\alpha=\nicefrac{r}{H(X|S)}$. Moreover, for $H(X|S)\leq r< H(X)$ we have that
 \begin{align}
 \label{th22}
 \max\{{L'}_1^{r},L_2\}\leq g^{r}(P_{S,X,T})\leq H(T|S),
 \end{align}
 where $
 {L'}_1^{r} = H(T|S)-H(S|T)=H(T)-H(S).$
 
\end{theorem}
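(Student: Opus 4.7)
The plan is to handle the upper and lower bounds separately, building on the decomposition~\eqref{key} and the extended versions of the FRL and SFRL from \cite[Lemmata 4 and 5]{king3}. The upper bound $I(Y;T)\leq H(T|S)$ is immediate: since $I(Y;S)=0$, the chain rule gives $I(Y;T)\leq I(Y;T,S)=I(Y;S)+I(Y;T|S)=I(Y;T|S)\leq H(T|S)$, which holds irrespective of the compression level.

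For the lower bound $L_1^r$ I would apply the extended FRL to $(X,S,T)$ to obtain a representation $Y$ satisfying (i) $I(Y;S)=0$, (ii) $H(T|Y,X,S)=0$ (i.e., $T$ is determined by $(Y,X,S)$, arranged by packing into $Y$ the part of $T$ not already carried by $(X,S)$), and (iii) $I(X;Y)=r$ (selected through the rate parameter of the construction). Then $I(X,S;Y)=I(X;Y|S)\geq I(X;Y)=r$ by the chain rule together with $I(Y;S)=0$, and substituting into~\eqref{key} with the trivial inequality $I(X,S;Y|T)\leq H(X,S|T)$ yields $I(Y;T)\geq r+H(T|X,S)-H(X,S|T)=L_1^r$. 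The bound $L_2$ follows from the same scheme with the extended SFRL in place of the FRL: its tighter entropy control trades exact rate tuning for a logarithmic penalty on $H(Y|T)$, and the rate constraint is automatic since $H(Y)$ is already small. The bound $L_3^r$ interpolates the two via a Bernoulli$(\alpha)$ time-sharing variable $Q$ with $\alpha=r/H(X|S)$, independent of $(S,X,T)$, using the rate-$H(X|S)$ version of the $L_1^r$-construction on $\{Q=1\}$ and the SFRL branch on $\{Q=0\}$; applying the SFRL to the mixed representation as a whole, rather than branch by branch, produces the weighted logarithmic argument $(1-\alpha)I(X,S;T)+\alpha\min\{H(T),H(X,S)\}$ appearing in $L_3^r$. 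For the second regime $H(X|S)\leq r<H(X)$, the rate budget now exceeds $H(X|S)$, so the $L_1^r$-construction runs at its full useful rate without truncation, giving ${L'}_1^r=H(T|S)-H(S|T)=H(T)-H(S)$, and $L_2$ is inherited verbatim.

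The main obstacle will be producing, from the extended FRL, a single $Y$ that simultaneously satisfies $I(Y;S)=0$, $H(T|Y,X,S)=0$, and $I(X;Y)=r$ exactly for any prescribed $r\in[0,H(X|S)]$. This requires the fine-grained atom-splitting machinery inside the extended FRL and careful bookkeeping of two sources of extra information (the $T$-residual given $(X,S)$ and the encoded fraction of $X$ given $S$) so that neither component leaks into $S$. Once this construction is in hand, the SFRL refinement and the time-sharing interpolation are largely mechanical, and the second regime follows at once from the $L_1^r$ calculation.
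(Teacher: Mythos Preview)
Your upper-bound argument is fine and is essentially the content of the inequality the paper cites from \cite[Theorem~7]{kostala}. The lower-bound strategy, however, has two concrete gaps relative to the paper's proof.

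First, for $L_1^r$ you aim for a single $Y$ with $I(Y;S)=0$, $H(T|Y,X,S)=0$, and $I(X;Y)=r$ exactly, and then use $I(X,S;Y)\geq I(X;Y)=r$ in~\eqref{key}. The paper does not try to hit $I(X;Y)=r$; it builds $Y$ in two stages. Step~(i): apply the FRL to $(S,X)$ to get $U$ with $I(U;S)=0$ and $H(X|U,S)=0$. Step~(ii): randomize to $U'$, setting $U'=U$ with probability $\alpha=r/H(X|S)$ and $U'=c$ otherwise. Step~(iii): apply the FRL again, now to $((S,X,U'),T)$, to obtain $Y'$ independent of $(S,X,U')$ with $H(T|S,X,U',Y')=0$. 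Step~(iv): set $Y=(U',Y')$. The key accounting is that $I(X,S;Y)=I(U';X|S)=\alpha H(X|S)=r$ \emph{exactly}, while $I(X;Y)=\alpha I(U;X)\leq \alpha H(X|S)=r$ with strict inequality in general. So the quantity pinned to $r$ is $I(X,S;Y)$ (the term in~\eqref{key}), not $I(X;Y)$ (the constraint). Your plan to force $I(X;Y)=r$ and then lower-bound $I(X,S;Y)$ would require a different construction whose existence you have not established; the ``fine-grained atom-splitting'' you allude to is precisely the two-stage $(U',Y')$ construction, not a single extended-FRL call.

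Second, for $L_3^r$ the paper is \emph{not} time-sharing between the $L_1$-scheme and the $L_2$-scheme. It keeps the same randomized $U'$ and only changes step~(iii): instead of the FRL, it uses the \emph{conditional} SFRL with input $(S,X)\mid U'$ and target $T$, which yields $I(Y';S,X\mid T,U')\leq \log(I(X,S;T\mid U')+1)+4$. Expanding $I(X,S;T\mid U')=(1-\alpha)I(X,S;T)+\alpha\, I(X,S;T\mid U)$ and bounding $I(X,S;T\mid U)\leq\min\{H(T),H(X,S)\}$ is exactly what produces the weighted argument inside the logarithm in $L_3^r$. Your ``apply SFRL to the mixture as a whole'' gestures in this direction but does not identify the conditional-SFRL step, which is where the $\alpha$-weighting enters.

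Your sketches for $L_2$ and ${L'}_1^{r}$ are essentially right, with one correction: for $L_2$ the SFRL is applied to $((X,S),T)$ and yields $Y$ independent of $(X,S)$, so the rate constraint is vacuous; the logarithmic penalty bounds $I(X,S;Y\mid T)$ in~\eqref{key}, not $H(Y\mid T)$.
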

\begin{sproof}
	The complete proof is provided in Appendix A. By removing the rate constraint we obtain the upper bound. To derive $L_2$, we use the SFRL \cite[Theorem~1]{kosnane} with $X\leftarrow (X,S)$, $Y\leftarrow T$, and the output $Z\leftarrow Y$. Using \eqref{key}, we have that
	\begin{align*}
	I(T;Y)&=H(T|X,S)-I(X,S;Y|T)\\&\geq H(T|X,S)\!-\!\left( \log(I(X,S;T)+1)+4 \right)\!=\!L_2.
	\end{align*} 
	The key idea to achieve $L_1^{r}$ and $L_3^{r}$ is to randomize over the output $U$ of the FRL \cite[Lemma1]{kostala} to produce $U'$. To obtain $L_1^{r}$, let $U$ be the output of the FRL. First, we show that $I(U;X)\leq H(X|S)$. We then construct $U'$ using the randomization technique from \cite[Lemma 4]{king3}. Let 
    \begin{equation*}
        U'=
        \begin{cases}
	       U,\ \text{w.p}.\ \alpha\\
	       c,\ \ \text{w.p.}\ 1-\alpha
	   \end{cases},
    \end{equation*}
    where $\alpha=\nicefrac{r}{H(X|S)}$. Furthermore, let $Y'$ be produced by the FRL with $S\leftarrow (S,X,U')$ and $X\leftarrow T$. Finally, let $Y=(U',Y')$. We first show that, for such $Y$, the constraints $I(S;Y)=0$ and $I(X;Y)\leq r$ are satisfied. Then, by using \eqref{key}, we obtain $L_1^{r}$. More precisely, the key equations for this result are
	\begin{align*}
	&I(Y',U';X,S)\!=\!I(U';X,S)\!=\!I(U';S)\!+\!I(U';X|S)\!=\!r,\\
	&H(T|X,S,U',Y')=0.
	\end{align*} 
	To derive $L_3^{r}$, we use same $U'$, but to produce $Y'$ we use the conditional SFRL with $S\leftarrow (S,X)|U'$ and $X\leftarrow T$. Next, we find an upper bound on $I(X,S;Y',U'|T)$ in \eqref{key}. Using this upper bound and the fact that $I(X,S;Y',U')=r$ we obtain the lower bound $L_3^{r}$. Finally, to derive ${L'}_1^{r}$, we use the following inequality
	\begin{align*}
	g  ^{r}(P_{S,X,T})\geq g^{r=H(X|S)}(P_{S,X,T}).
	\end{align*}
\end{sproof}

\begin{corollary}
	In the regime $H(X|S)\leq r < H(X)$, if the sensitive attribute or secret is a deterministic function of the task $S=f(T)$, the lower bound ${L'}_1^{r}$ is tight.
\end{corollary}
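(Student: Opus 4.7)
The plan is to show that under the hypothesis $S = f(T)$, the lower bound ${L'}_1^r$ stated in Theorem~\ref{th.1} coincides exactly with the upper bound $H(T|S)$, so both must equal $g^{r}(P_{S,X,T})$ in the regime $H(X|S) \leq r < H(X)$.

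First I would invoke the fact that if $S = f(T)$, then $S$ is determined by $T$, hence $H(S|T) = 0$. Using the chain rule for joint entropy in two directions,
\begin{equation*}
H(T,S) \;=\; H(T) + H(S|T) \;=\; H(S) + H(T|S),
\end{equation*}
and substituting $H(S|T) = 0$ yields the identity $H(T|S) = H(T) - H(S)$.

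Next I would simply compare this with the definition ${L'}_1^{r} = H(T) - H(S)$ given in the statement of Theorem~\ref{th.1}. The two expressions are literally equal, so the lower bound matches the upper bound $H(T|S)$, and therefore
\begin{equation*}
g^{r}(P_{S,X,T}) \;=\; H(T|S) \;=\; H(T) - H(S) \;=\; {L'}_1^{r}
\end{equation*}
throughout the regime $H(X|S) \leq r < H(X)$. There is no real obstacle here; the result is essentially a one-line consequence of the chain rule together with Theorem~\ref{th.1}, and the main thing to check is that the achievability construction underlying ${L'}_1^{r}$ (namely, the argument $g^{r}(P_{S,X,T}) \geq g^{r=H(X|S)}(P_{S,X,T})$ used in the sketch of Theorem~\ref{th.1}) is already valid for any correlation structure between $S$ and $T$, so no separate construction is needed in the functional case.
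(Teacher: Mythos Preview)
Your proposal is correct and matches the paper's intended argument: since Theorem~\ref{th.1} already records ${L'}_1^{r} = H(T|S) - H(S|T)$, the assumption $S = f(T)$ forces $H(S|T)=0$, collapsing the lower bound onto the upper bound $H(T|S)$. The paper does not spell out a separate proof for this corollary, treating it as immediate from Theorem~\ref{th.1}, which is exactly what you do.
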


\begin{remark}
	\normalfont
    In \eqref{eq:problem_1}, we enforced perfect demographic parity as our definition of fairness. However, we may have employed similar techniques to those in~\Cref{th.1} to enforce perfect equalized odds, which is achieved imposing the conditional independence of the representation and the sensitive attribute, given the task, that is, that $I(Y;S|T) = 0$. Due to the limited space we leave this as a future work. 
\end{remark}
\begin{remark}
	\normalfont
	The lower bounds in~\Cref{th.1} can also be used to address the rate limited privacy-utility trade-off from~\cite{gun} with $r<H(X)$.
\end{remark}
\begin{remark}
	\normalfont
	In order to derive lower bounds in the regime $H(X|S)\leq r < H(X)$, we used the inequality $g^{r}(P_{S,X,T})\geq g^{r=H(X|S)}(P_{S,X,T})$. 
	This bound can also be achieved by using $Y=(U,Y')$, where $U$ is the output of the FRL, since in this regime $U$ satisfies that $I(U;X)\leq H(X|S)\leq r$ and $I(U;X,S)=H(X|S)$.
	Moreover, by letting $r=H(X|S)$, we have $L_1^{r}\geq L_3^{r}$. Hence, $L_3^{r}$ is not considered in \eqref{th22}. 
\end{remark}

The key idea to achieve $L_1^{r}$ and $L_3^{r}$ is to randomize over the output $U$ of the FRL to produce $U'$. To see the reason for producing $U'$, we focus on the term $I(Y;X,S)$ in \eqref{key}. 
The challenge is that, using the constraints $I(Y;S)=0$ and $I(Y;X)\leq r$, we have that
\begin{align*}
	I(Y;X,S)=I(Y;S)+I(Y,X|S)\leq H(X|S)
\end{align*}
and somehow we want to make this term as large as possible while satisfying the rate constraint. By producing $U$, we have 
\begin{align*}
    I(U;X,S)=I(U;S)+I(U,X|S)=H(X|S)
\end{align*}
\looseness=-1 since $I(U;S)=0$ and $H(X|U,S)=0$. However, considering the regime $r\leq H(X|S)$, the output $U$ does not necessarily satisfy that $I(U;X)\leq r$. Hence, we randomize over $U$ so that $I(U';X)$ becomes smaller or equal to $r$ at the expense of decreasing the utility from $H(X|S)$ to $r$, that is, $I(U;X,S)=H(X|S)$ and $I(U';X,S)=r$.  
Thus, the main reason to produce $U'$ is to achieve a non-zero utility for the term $I(U';X,S)$ in \eqref{key}, while satisfying $I(U';X)\leq r$. Furthermore, we produce $Y'$ to ensure that the term $H(T|X,S,Y)$ in \eqref{key} is zero. Note that $Y'$ does not change the term $I(Y;X,S)$ since it is independent of $(U',X,S)$.
	
\subsubsection{Comparison and Discussion}

 Here, we compare the obtained bounds in~\Cref{th.1}.
To compare $L_1^{r}$ and $L_2$, consider the case where $r$ is small and $H(X,T|S)\gg4$, then $L_2\geq L_1^{r}$. On the other hand, if $H(X,T|S)\leq 4$, then $L_2\leq L_1^{r}$. To compare $L_2$ and $L_3^{r}$, let $H(X|S)\leq H(X,S|T)$, then $L_2\geq L_3^{r}$ since $ r-\alpha H(X,S|T)\leq 0$ and
\begin{equation*}
    I(X,S;T)\leq (1\!-\!\alpha)I(X,S;T)\!+\!\alpha\min\{H(T),H(X,S)\}.
\end{equation*}
A simple example where $H(X|S)\leq H(X,S|T)$ is when $T$ is a deterministic function of $S$. On the other hand, let $H(X|S) \geq H(X,S|T)$. Ignoring the $\log(\cdot)$ terms (since they have smaller values compared to the other terms) we can have $L_2\leq L_3^{r}$. A simple example where this happens ($H(X|S) \geq H(X,S|T)$) is when $S$ is a deterministic function of $T$. Next, we compare $L_1^{r}$ and $L_3^{r}$. To do so, consider a strong compression requirement, that is $0 < r \ll H(X|S)$, and $H(X,S|T)\gg 4$, then we can have $L_1^{r}\leq L_3^{r}$. In other words, comparing $L_3^{r}$ and $L_1^{r}$, we have decreased the term $H(X,S|T)$ in $L_1^{r}$ by multiplying it by $0\leq\alpha\leq1$, but we get a penalty that corresponds to $\log((1\!-\!\alpha)I(X,S;T)\!+\!\alpha\min\{H(T),H(X,S)\}\!+\!1)\!+\!4$. Intuitively, when $r$ is small, then $L_3^{r}$ is larger than $L_1^{r}$, but when $r$ grows $L_1^{r}$ becomes larger. Considering the extreme case when $r=H(X|S)$, we see that $L_1^{r}\geq L_3^{r}$. When $r\geq H(X|S)$, it seems that ${L'}_1^{r}$ is a good candidate since its distance with the upper bound is $H(S|T)$. Furthermore, to achieve it, considering \eqref{key}, the term $I(Y;X,S)$ achieves $H(X|S)$, which is its maximum with the constraints $I(Y;S)=0$ and $I(X;Y)\leq r$. 
\begin{remark}
	\normalfont
	The advantage of $L_1^{r}$ and $L_3^{r}$ compared to $L_2$ is that they achieve a non-zero utility for the term $I(Y;X,S)$, that is, the added term corresponding to the encoding rate constraint $r$ in both bounds. For instance, consider a case where $T$ is a deterministic function of $X$. In this case, $L_2\leq 0$. However, both $L_1^{r}$ and $L_3^{r}$ can achieve non-zero utility.   
\end{remark}

\begin{example}
	Let $r=H(X|S)$ and $H(S|T)<4$. In this case, $L_1^{r}$ is dominant and we have $L_1^{r}\geq \max\{L_2^{r},L_3^{r}\}$. This follows since
	\begin{align*}
		&L_1^{r=H(X|S)}-L_2^{r=H(X|S)}=H(T|S)-H(T|X,S)\\&\qquad \quad \quad+ \log(I(X,S;T)+1)+4 -H(S|T)\geq 0.
	\end{align*} 
	Furthermore,
	\begin{align*}
		&L_1^{r=H(X|S)}-L_3^{r=H(X|S)}=\\&\qquad \qquad \quad \log(\min\{H(T),H(X,S)\}+1)+4\geq 0.
	\end{align*} 
\end{example}

\subsection{Bounds on Problem 2}

Similarly to what we did in~\Cref{subsec:main_p1}, here we derive upper and lower bounds on $g^{r}_c(P_{S,X,T})$ defined in \eqref{eq:problem_2}.

\begin{theorem}
    \label{th.2}
    For every compression level $H(X|T,S) \leq r $ and random variables $(S,X,T) \sim P_{S,X,T}$, we have that
    \begin{align*}
		g^{r}_c(P_{S,X,T})=H(T|S).
	\end{align*}
	Furthermore, if $\log(I(X;T|S)+1)+4\leq r< H(X|T,S)$, then
	\begin{align}
    \label{chap}
		L_1\leq g^{r}_c(P_{S,X,T})\leq H(T|S)
	\end{align}
	where $L_1 = H(T|S,X)-\log(I(X;T|S)+1)-4$.
	\end{theorem}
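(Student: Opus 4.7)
The plan is to mirror the scheme used for Theorem 1: first derive a conditional analog of the utility decomposition \eqref{key}, namely
\[
I(Y;T|S) = I(X;Y|S) + H(T|X,S) - H(T|X,S,Y) - I(X;Y|S,T),
\]
and then obtain the two bounds constructively through (extended) FRL and SFRL. The upper bound $g^{r}_c(P_{S,X,T}) \leq H(T|S)$ in both regimes is immediate from $I(Y;T|S) = H(T|S) - H(T|Y,S) \leq H(T|S)$.

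For the equality in the first regime $r \geq H(X|T,S)$, I would apply the extended FRL with ``secret'' $S$ and ``data'' $(X,T)$ to obtain a representation $Y$ with $Y \perp S$ such that $(X,T)$ is a deterministic function of $(Y,S)$. Then $I(Y;S) = 0$ holds by construction; since $T$ is a function of $(Y,S)$ we have $H(T|Y,S) = 0$, so $I(Y;T|S) = H(T|S)$, matching the upper bound; and since $X$ is also a function of $(Y,S)$ we have $H(X|Y,S,T) = 0$, so $I(X;Y|S,T) = H(X|S,T) \leq r$ by the regime hypothesis. All constraints are met and the upper bound is attained.

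For the lower bound $L_1$ in the regime $\log(I(X;T|S)+1)+4 \leq r < H(X|T,S)$, I would invoke the extended SFRL in a conditional form, mirroring its role in the derivation of $L_3^{r}$ in Theorem 1 but with conditioning variable $S$, ``secret'' $X$, and ``data'' $T$, to produce $Y$ satisfying $Y \perp (X,S)$ (so that $I(Y;S) = 0$ and $I(X;Y|S) = 0$), $T = f(X,S,Y)$ (so that $H(T|X,S,Y) = 0$), and $H(Y|T,S) \leq \log(I(X;T|S)+1)+4$. Substituting into the decomposition, the compression constraint is verified via $I(X;Y|S,T) \leq H(Y|S,T) \leq \log(I(X;T|S)+1)+4 \leq r$, and the utility bound $I(Y;T|S) \geq H(T|X,S) - \log(I(X;T|S)+1) - 4 = L_1$ follows. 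The main obstacle is securing an SFRL variant whose output is simultaneously marginally independent of $S$ (needed for perfect demographic parity) and admits an entropy bound in terms of the conditional mutual information $I(X;T|S)$, rather than the looser $I(X,S;T)$ produced by a direct application of standard SFRL to $(X,S) \to T$; this is precisely what the extended SFRL from \cite{king3} is used for, via a careful conditional coupling across the realisations of $S$.
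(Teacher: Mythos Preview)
Your proposal is essentially correct and follows the same route as the paper (FRL for the first regime, conditional SFRL for the second), but two points deserve comment.

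\emph{First regime.} The paper's construction is slightly simpler than yours: it applies the FRL with secret $S$ and data $T$ (not $(X,T)$), obtaining $Y$ with $I(Y;S)=0$ and $H(T|Y,S)=0$. The rate constraint is then handled by the trivial observation that $I(X;Y|S,T)\le H(X|S,T)\le r$ holds for \emph{any} $Y$ in this regime, so there is no need to additionally encode $X$ into the representation. Your construction also works, but saturates the constraint ($I(X;Y|S,T)=H(X|S,T)$) rather than merely satisfying it.

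\emph{Second regime.} You misstate what the (conditional) SFRL delivers: it bounds $I(Y;X|T,S)\le \log(I(X;T|S)+1)+4$, not $H(Y|T,S)$. In general $H(Y|T,S)=I(Y;X|T,S)+H(Y|X,T,S)$, and the second term need not vanish (indeed, in the Poisson functional representation $Y$ can have infinite entropy). Fortunately this does not damage your argument: the rate constraint you need is exactly $I(X;Y|S,T)=I(Y;X|T,S)$, so it is bounded directly by the SFRL output and the detour through $H(Y|S,T)$ is unnecessary. With this correction, your conditional decomposition
\[
I(Y;T|S)=I(X;Y|S)+H(T|X,S)-H(T|X,S,Y)-I(X;Y|S,T)
\]
together with $I(Y;X,S)=0$ and $H(T|X,S,Y)=0$ yields $L_1$ exactly as in the paper, which obtains the same conclusion by expanding $I(Y;X,T|S)$ two ways.
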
   
\begin{proof}
	The upper bound can be easily derived since $I(Y;T|S)\leq H(T|S)$. When $r\geq H(X|T,S)$, the bounded irrelevant information constraint is satisfied since $I(X;Y|S,T)\leq H(X|S,T)$. Furthermore, let $Y$ be found by the FRL \cite[Lemma 1]{kostala} using $S\leftarrow S$ and $X\leftarrow T$. In this case, we have that
	\begin{equation}
	I(Y;S)=0 \textnormal{ and } H(T|Y,S)=0.
    \label{mama}
	\end{equation}  
	Furthermore, equation~\eqref{mama} means that $$I(T;Y|S)=H(T|S).$$ Hence, considering $r\geq H(X|T,S)$ we have that $g^{r}_c(P_{S,X,T})=H(T|S)$. Next, let $\log(I(X;T|S)+1)+4\leq r\leq H(X|T,S)$. We build $Y$ using the conditional SFRL from~\cite{kosnane}. Thus, $Y$ satisfies that
	\begin{align}
	I(Y;S,X)&=0,\label{3ek}\\
	H(T|S,X,Y)&=0, \textnormal{ and} \label{3ek1}\\
	I(Y;X|T,S)&\leq \log(I(X;T|S)+1)+4.\label{3ek2}
	\end{align}
	Next, we expand $I(Y;X,T|S)$ in two ways. We have that
	\begin{align}
	I(Y;X,T&|S)=I(Y;T|S)+I(Y;X|S,T) \nonumber \\  &\leq I(Y;T|S)+\log(I(X;T|S)+1)+4,\label{esha2}
	\end{align}
	where we used \eqref{3ek2}.
	On the other hand, we have that
	\begin{align}
	I(Y;X,T|S)&=I(Y;X|S)+I(Y;T|X,S)\nonumber\\&=H(T|X,S),\label{esha}
	\end{align}
	where we used \eqref{3ek} and \eqref{3ek1}. Combining \eqref{esha2} and \eqref{esha}, we conclude the proof noting that
    \begin{align*}
	   I(T;Y|S)\geq H(T|X,S)-\log(I(X;T|S)+1)-4.
	\end{align*}
\end{proof}

\begin{corollary}
	When the data $X$ is a deterministic function of the sensitive attribute $S$ or task $T$, we have $g^{r}_c(P_{S,X,T})=H(T|S)$, since in this case $r\geq H(X|T,S)=0$.
\end{corollary}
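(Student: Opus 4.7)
The plan is to observe that the corollary reduces directly to the first (equality) regime of Theorem~\ref{th.2}, so the only thing to check is that the hypothesis $r\geq H(X|T,S)$ is automatically satisfied whenever $X$ is a deterministic function of $S$ or of $T$.

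First, I would unpack the two cases for the determinism assumption. If $X=f(S)$ for some measurable $f$, then $H(X|S)=0$, and the monotonicity of entropy under conditioning (namely $H(X|T,S)\leq H(X|S)$) forces $H(X|T,S)=0$. Symmetrically, if $X=g(T)$, then $H(X|T)=0$ and $H(X|T,S)\leq H(X|T)=0$, again giving $H(X|T,S)=0$. In both cases, any admissible rate $r\geq 0$ satisfies the hypothesis $r\geq H(X|T,S)$ of the equality regime of Theorem~\ref{th.2}.

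Then I would simply invoke Theorem~\ref{th.2}: since $r\geq H(X|T,S)$, we are in the regime where $g^{r}_c(P_{S,X,T})=H(T|S)$, so the corollary follows. Concretely, the achievability piece of that theorem constructs $Y$ via the FRL with $S\leftarrow S$ and $X\leftarrow T$, which gives $I(Y;S)=0$ and $H(T|Y,S)=0$, and hence $I(Y;T|S)=H(T|S)$; the constraint $I(X;Y|S,T)\leq r$ is vacuous because $I(X;Y|S,T)\leq H(X|S,T)=0\leq r$. The matching upper bound $I(Y;T|S)\leq H(T|S)$ is immediate.

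There is no real obstacle here: the only subtlety is the observation that the determinism hypothesis kills $H(X|T,S)$, at which point Theorem~\ref{th.2} does all the work. No additional use of the SFRL is needed in this regime, since the rate constraint on irrelevant information is non-binding.
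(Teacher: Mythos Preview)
Your proposal is correct and follows exactly the paper's reasoning: the corollary is stated with its own justification (``since in this case $r\geq H(X|T,S)=0$''), and you simply make explicit that $X=f(S)$ or $X=g(T)$ forces $H(X|T,S)=0$, after which the first regime of Theorem~\ref{th.2} applies directly. Your additional recap of the FRL construction and the trivial upper bound is accurate and matches the proof of Theorem~\ref{th.2} verbatim.
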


\begin{remark}
    The results in~\Cref{th.2} are obtained for a limited level of compression requirements, that is, for rates $r$ larger than a certain threshold. The problem where the compression requirement is arbitrarily strong is left to the extended version of the paper. 
\end{remark}
 
\section{Conclusion}
In this paper, we studied the fundamental limits to generate representations $Y$ of some data $X$ that are informative of a certain task $T$ and are fair (have perfect demographic parity) and/or private (have perfect privacy). That is, the limits of generating representations that are independent of a sensitive attribute or secret $S$. Moreover, the representations are required to have a certain level of compression (or a bounded encoding rate $r$). 

\looseness=-1 More precisely, we studied two different problems. In Problem 1, the utility is measured by the mutual information $I(Y;T)$ the representation keeps about the task, and the compression requirement is the standard encoding rate $I(X;Y)$. In problem 2, we draw the utility and the compression requirements from the Conditional Fairness Bottleneck. Namely, the utility is measured by the mutual information $I(Y;T|S)$ the representation keeps about the task that is not shared by the sensitive attribute or secret, while the compression requirement is a limit on the information $I(X;Y|S,T)$ the representation keeps about the data that is not shared by the task and the sensitive attribute or secret.

To design these representations, we used a randomization technique on top of the Functional Representation and Strong Functional Representation Lemmata. These lemmata are constructive and give us both theoretical bounds on the problems and a mechanism to obtain the representations. Therefore, the combination of this randomization technique with the lemmata leads to simple representation designs. In both problems, when the rate $r$ is larger than a threshold, the obtained designs can be optimal. Specifically, in Problem 1, when the sensitive attribute or the secret is a deterministic function of the task, the lower bound is tight.

\section*{Appendix A}
\textit{Proof of Theorem 1:} To derive the upper bound, by removing the rate constraint we have that
\begin{align*}
g^{r}(P_{S,X,T})\leq \sup_{\begin{array}{c} 
	\substack{P_{Y|S,X,T}:I(Y;S)=0}
	\end{array}}I(Y;T) \leq H(T|S),
\end{align*}
where the last inequality holds by using \cite[Theorem 7]{kostala}. Next, to derive $L_2$, we use the SFRL \cite[Theorem~1]{kosnane}. Let $Y$ be the outcome of the SFRL with $X\leftarrow (X,S)$, $Y\leftarrow T$, and the output $Z\leftarrow Y$. In this way, we have that
\begin{align}
I(X,S;Y)&=0,\label{s}\\
H(T|Y,X,S)&=0, \textnormal{ and} \label{s1}\\
I(X,S;Y|T)&\leq \log(I(X,S;T)+1)+4. \nonumber
\end{align} 
Clearly, the outcome $Y$ satisfies the constraints $I(Y;S)=0$  and $I(Y;X)=0\leq r$. Then, using~\eqref{key} we have that
\begin{align*}
I(T;Y)&\stackrel{(a)}{=}H(T|X,S)-I(X,S;Y|T)\\&\geq H(T|X,S)\!-\!\left( \log(I(X,S;T)+1)+4 \right)\\&=L_2,
\end{align*}
where in (a) we used \eqref{s} and \eqref{s1}. To obtain $L_1^{r}$ ,let $U$ be the output of the FRL \cite[Lemma~1]{kostala}. Then, we have that
\begin{align}
I(S;U)&=0,\label{ss}\\
H(X|U,S)&=0, \textnormal{ and} \label{ss1} \\
I(U;X)&\leq H(X|S),\label{tt}
\end{align}
where~\eqref{tt} holds since
\begin{align*}
I(U;X)&\stackrel{(a)}{=}I(U;S)+H(X|S)-H(X|U,S)-I(S;U|X)\\&\stackrel{(b)}{\leq} H(X|S).
\end{align*}
Step (a) holds for any correlated $X$, $S$, and $U$ and in step (b) we used \eqref{ss} and \eqref{ss1}. Now, we construct $U'$ using the randomization technique that is used in \cite[Lemma 4]{king3}. Let $$U'=\begin{cases}
U,\ \text{w.p}.\ \alpha\\
c,\ \ \text{w.p.}\ 1-\alpha
\end{cases},$$ 
\looseness=-1 where $c$ is a constant which does not belong to $\mathcal{X}\cup\mathcal{S}\cup\mathcal{U}$ and $\alpha=\nicefrac{r}{H(X|S)}$. Furthermore, let $Y'$ be produced by the FRL with $S\leftarrow (S,X,U')$ and $X\leftarrow T$. Hence, we have that
\begin{align}
I(Y';S,X,U')&=0 \textnormal{ and} \label{sss}\\
H(T|Y',S,X,U')&=0.\label{ssss}
\end{align}
Finally, let $Y=(U',Y')$. We first check the constraints $I(S;Y)=0$ and $I(X;Y)\leq r$. We have that
\begin{align*}
I(S;Y)&=I(S;Y',U')\\&\stackrel{(a)}{=}I(S;U')\\&=\alpha I(S;U)=0,
\end{align*}
where (a) follows by \eqref{sss}. Moreover, it holds that
\begin{align*}
I(X;Y)&=I(X;Y',U')\\&\stackrel{(a)}{=}I(X;U')\\&=\alpha I(X;U)\\ &\stackrel{(b)}{\leq} \alpha H(X|S)=r,
\end{align*}
where (a) follows from \eqref{sss} and (b) follows from \eqref{tt}. Next, by using \eqref{key} we have that
\begin{align*}
I(T;Y)&=I(T;Y',U') \\
&\stackrel{(a)}{=}H(T|X,S)\!+\!I(Y',U';X,S)\!-\!I(X,S;Y'\!,U'|T) \\
&= H(T|X,S)+I(U';X,S)-I(X,S;Y',U'|T) \\
& \geq H(T|X,S)+I(U';X,S)-H(X,S|T)
\\& \stackrel{(b)}{=}H(T|X,S)+I(U';X|S)-H(X,S|T)\\
&\stackrel{(c)}{=}H(T|X,S)+r-H(X,S|T)
\end{align*}
where in (a) we used \eqref{ssss} and in (b) we used that $I(U';S)=0$. Furthermore, step (c) holds since
\begin{align*}
I(U';X|S)=\alpha I(U;X|S)=\alpha H(X|S)=r,
\end{align*}
where we used \eqref{ss1}. 

To derive $L_3^{r}$, we use same $U'$, but to produce $Y'$ we use the conditional SFRL with $S\leftarrow (S,X)|U'$ and $X\leftarrow T$. 
In this case, $U'$ satisfies \eqref{sss}, \eqref{ssss} and
\begin{align*}
&I(Y';S,X|T,U') \\
&\leq \log(I(X,S;T|U')+1)+4\\
&= \log((1-\alpha)I(X,S;T)+\alpha I(X,S;T|U)+1)+4\\
&\!\leq \log((1\!-\!\alpha)I(X,S;\!T)\!+\!\alpha\min\{H(X,S),H(T)\}\!+\!1\!)\!+\!4.
\end{align*}	
Next, we find an upper bound on $I(X,S;Y',U'|T)$ in \eqref{key}. We have that
\begin{align}
I(&X,S;Y',U'|T)\nonumber \\
&=I(X,S;U'|T)+I(X,S;Y'|T,U')\nonumber\\
&=\alpha I(X,S;U|T)+I(X,S;Y'|T,U') \nonumber \\
&\leq \alpha H(X,S|T)+\log((1\!-\!\alpha)I(X,S;\!T) \nonumber\\ & \quad +\!\alpha\min\{H(X,S),H(T)\}\!+\!1\!)\!+\!4.\label{t}
\end{align}
Thus, using \eqref{key}, \eqref{t}, \eqref{ssss}, and the fact that $I(Y;S,X)=r$, we have that $I(T;Y) \geq L_3^r$, where
\begin{align*}
    L_3^r = &H(T|X,S)+r-\alpha H(X,S|T) \\
    &-\log((1\!-\!\alpha)I(X,S;\!T)\! \\
    &+\!\alpha\min\{H(X,S),H(T)\}\!+\!1\!)\!+\!4.
\end{align*}
Finally, to derive ${L'}_1^{r}$, note that for any $H(X|S)\leq r< H(X)$, we have that
\begin{align*}
g^{r}(P_{S,X,T})&\geq g^{r=H(X|S)}(P_{S,X,T})\\&\geq L_1^{r=H(X|S)}\\&=H(T|X,S)+H(X|S)-H(X,S|T)\\&= H(T,X|S)-H(S,X|T)\\&= H(T|S)-H(S|T).
\end{align*}
\hfill\IEEEQED

\bibliographystyle{IEEEtran}
{\balance \bibliography{IEEEabrv,ITW2024}}

\begin{thebibliography}{10}
\providecommand{\url}[1]{#1}
\csname url@samestyle\endcsname
\providecommand{\newblock}{\relax}
\providecommand{\bibinfo}[2]{#2}
\providecommand{\BIBentrySTDinterwordspacing}{\spaceskip=0pt\relax}
\providecommand{\BIBentryALTinterwordstretchfactor}{4}
\providecommand{\BIBentryALTinterwordspacing}{\spaceskip=\fontdimen2\font plus
\BIBentryALTinterwordstretchfactor\fontdimen3\font minus
  \fontdimen4\font\relax}
\providecommand{\BIBforeignlanguage}[2]{{%
\expandafter\ifx\csname l@#1\endcsname\relax
\typeout{** WARNING: IEEEtran.bst: No hyphenation pattern has been}%
\typeout{** loaded for the language `#1'. Using the pattern for}%
\typeout{** the default language instead.}%
\else
\language=\csname l@#1\endcsname
\fi
#2}}
\providecommand{\BIBdecl}{\relax}
\BIBdecl

\bibitem{gun}
S.~{Sreekumar} and D.~{G\"{u}nd\"{u}z}, ``Optimal privacy-utility trade-off
  under a rate constraint,'' in \emph{2019 IEEE International Symposium on
  Information Theory}, July 2019, pp. 2159--2163.

\bibitem{vari}
B.~Rodr{\'\i}guez-G{\'a}lvez, R.~Thobaben, and M.~Skoglund, ``A variational
  approach to privacy and fairness,'' in \emph{2021 IEEE Information Theory
  Workshop (ITW)}.\hskip 1em plus 0.5em minus 0.4em\relax IEEE, 2021, pp. 1--6.

\bibitem{zhao2022}
H.~Zhao and G.~J. Gordon, ``Inherent tradeoffs in learning fair
  representations,'' \emph{Journal of Machine Learning Research}, vol.~23,
  no.~57, pp. 1--26, 2022.

\bibitem{zhao2019}
H.~Zhao, A.~Coston, T.~Adel, and G.~J. Gordon, ``Conditional learning of fair
  representations,'' in \emph{International Conference on Learning
  Representations}, 2019.

\bibitem{zemel}
R.~Zemel, Y.~Wu, K.~Swersky, T.~Pitassi, and C.~Dwork, ``Learning fair
  representations,'' in \emph{International conference on machine
  learning}.\hskip 1em plus 0.5em minus 0.4em\relax PMLR, 2013, pp. 325--333.

\bibitem{gronowski2023classification}
A.~Gronowski, W.~Paul, F.~Alajaji, B.~Gharesifard, and P.~Burlina,
  ``Classification utility, fairness, and compactness via tunable information
  bottleneck and r{\'e}nyi measures,'' \emph{IEEE Transactions on Information
  Forensics and Security}, 2023.

\bibitem{hardt}
M.~Hardt, E.~Price, and N.~Srebro, ``Equality of opportunity in supervised
  learning,'' \emph{Advances in neural information processing systems},
  vol.~29, 2016.

\bibitem{king3}
A.~Zamani, T.~J. Oechtering, and M.~Skoglund, ``On the privacy-utility
  trade-off with and without direct access to the private data,'' \emph{IEEE
  Transactions on Information Theory}, vol.~70, no.~3, pp. 2177--2200, 2024.

\bibitem{borz}
B.~{Rassouli} and D.~{G\"{u}nd\"{u}z}, ``On perfect privacy,'' \emph{IEEE
  Journal on Selected Areas in Information Theory}, vol.~2, no.~1, pp.
  177--191, 2021.

\bibitem{khodam}
A.~Zamani, T.~J. Oechtering, and M.~Skoglund, ``A design framework for strongly
  $\chi^2$-private data disclosure,'' \emph{IEEE Transactions on Information
  Forensics and Security}, vol.~16, pp. 2312--2325, 2021.

\bibitem{Khodam22}
{A. Zamani, T. J. Oechtering, and M. Skoglund}, ``Data disclosure with non-zero
  leakage and non-invertible leakage matrix,'' \emph{IEEE Transactions on
  Information Forensics and Security}, vol.~17, pp. 165--179, 2022.

\bibitem{kostala}
Y.~Y. Shkel, R.~S. Blum, and H.~V. Poor, ``Secrecy by design with applications
  to privacy and compression,'' \emph{IEEE Transactions on Information Theory},
  vol.~67, no.~2, pp. 824--843, 2021.

\bibitem{makhdoumi}
A.~Makhdoumi, S.~Salamatian, N.~Fawaz, and M.~M{\'e}dard, ``From the
  information bottleneck to the privacy funnel,'' in \emph{2014 IEEE
  Information Theory Workshop}, 2014, pp. 501--505.

\bibitem{yamamoto1988rate}
H.~Yamamoto, ``A rate-distortion problem for a communication system with a
  secondary decoder to be hindered,'' \emph{IEEE Transactions on Information
  Theory}, vol.~34, no.~4, pp. 835--842, 1988.

\bibitem{sankar}
L.~Sankar, S.~R. Rajagopalan, and H.~V. Poor, ``Utility-privacy tradeoffs in
  databases: An information-theoretic approach,'' \emph{IEEE Transactions on
  Information Forensics and Security}, vol.~8, no.~6, pp. 838--852, 2013.

\bibitem{Calmon2}
H.~{Wang}, L.~{Vo}, F.~P. {Calmon}, M.~{M\'{e}dard}, K.~R. {Duffy}, and
  M.~{Varia}, ``Privacy with estimation guarantees,'' \emph{IEEE Transactions
  on Information Theory}, vol.~65, no.~12, pp. 8025--8042, Dec 2019.

\bibitem{kostala2}
Y.~Y. Shkel and H.~V. Poor, ``A compression perspective on secrecy measures,''
  \emph{IEEE Journal on Selected Areas in Information Theory}, vol.~2, no.~1,
  pp. 163--176, 2021.

\bibitem{yeung1991new}
R.~W. Yeung, ``A new outlook on {S}hannon's information measures,'' \emph{{IEEE
  Transactions on Information Theory}}, vol.~37, no.~3, pp. 466--474, 1991.

\bibitem{edwards2016censoring}
H.~Edwards and A.~Storkey, ``Censoring representations with an adversary,'' in
  \emph{International Conference on Learning Representations (ICLR)}, 2016, pp.
  1--14.

\bibitem{creager2019flexibly}
E.~Creager, D.~Madras, J.-H. Jacobsen, M.~Weis, K.~Swersky, T.~Pitassi, and
  R.~Zemel, ``Flexibly fair representation learning by disentanglement,'' in
  \emph{International conference on machine learning}.\hskip 1em plus 0.5em
  minus 0.4em\relax PMLR, 2019, pp. 1436--1445.

\bibitem{louizos2015variational}
C.~Louizos, K.~Swersky, Y.~Li, M.~Welling, and R.~Zemel, ``The variational fair
  autoencoder,'' \emph{International Conference on Learning Representations
  (ICLR)}, 2016.

\bibitem{gupta2021controllable}
U.~Gupta, A.~M. Ferber, B.~Dilkina, and G.~Ver~Steeg, ``Controllable guarantees
  for fair outcomes via contrastive information estimation,'' in
  \emph{Proceedings of the AAAI Conference on Artificial Intelligence},
  vol.~35, no.~9, 2021, pp. 7610--7619.

\bibitem{de2022funck}
J.~M. de~Freitas and B.~C. Geiger, ``Funck: Information funnels and bottlenecks
  for invariant representation learning,'' \emph{arXiv preprint
  arXiv:2211.01446}, 2022.

\bibitem{yamamoto}
H.~Yamamoto, ``A source coding problem for sources with additional outputs to
  keep secret from the receiver or wiretappers (corresp.),'' \emph{IEEE
  Transactions on Information Theory}, vol.~29, no.~6, pp. 918--923, 1983.

\bibitem{kosnane}
C.~T. Li and A.~El~Gamal, ``Strong functional representation lemma and
  applications to coding theorems,'' \emph{IEEE Transactions on Information
  Theory}, vol.~64, no.~11, pp. 6967--6978, 2018.

\end{thebibliography}


\begin{thebibliography}{10}
\providecommand{\url}[1]{#1}
\csname url@samestyle\endcsname
\providecommand{\newblock}{\relax}
\providecommand{\bibinfo}[2]{#2}
\providecommand{\BIBentrySTDinterwordspacing}{\spaceskip=0pt\relax}
\providecommand{\BIBentryALTinterwordstretchfactor}{4}
\providecommand{\BIBentryALTinterwordspacing}{\spaceskip=\fontdimen2\font plus
\BIBentryALTinterwordstretchfactor\fontdimen3\font minus
  \fontdimen4\font\relax}
\providecommand{\BIBforeignlanguage}[2]{{%
\expandafter\ifx\csname l@#1\endcsname\relax
\typeout{** WARNING: IEEEtran.bst: No hyphenation pattern has been}%
\typeout{** loaded for the language `#1'. Using the pattern for}%
\typeout{** the default language instead.}%
\else
\language=\csname l@#1\endcsname
\fi
#2}}
\providecommand{\BIBdecl}{\relax}
\BIBdecl

\bibitem{rassoul1}
B.~Rassouli and D.~G\"{u}nd\"{u}z, ``On perfect privacy and maximal
  correlation,'' \emph{arXiv preprint arXiv:1712.08500}, 2017.

\bibitem{makhdoumi}
A.~Makhdoumi, S.~Salamatian, N.~Fawaz, and M.~M{\'e}dard, ``From the
  information bottleneck to the privacy funnel,'' in \emph{2014 IEEE
  Information Theory Workshop (ITW 2014)}.\hskip 1em plus 0.5em minus
  0.4em\relax IEEE, 2014, pp. 501--505.

\bibitem{tishby}
N.~Tishby, F.~C. Pereira, and W.~Bialek, ``The information bottleneck method,''
  \emph{arXiv preprint physics/0004057}, 2000.

\bibitem{yamamoto}
H.~Yamamoto, ``A source coding problem for sources with additional outputs to
  keep secret from the receiver or wiretappers (corresp.),'' \emph{IEEE
  Transactions on Information Theory}, vol.~29, no.~6, pp. 918--923, 1983.

\bibitem{sankar}
L.~Sankar, S.~R. Rajagopalan, and H.~V. Poor, ``Utility-privacy tradeoffs in
  databases: An information-theoretic approach,'' \emph{IEEE Transactions on
  Information Forensics and Security}, vol.~8, no.~6, pp. 838--852, 2013.

\bibitem{dwork1}
C.~Dwork, F.~McSherry, K.~Nissim, and A.~Smith, ``Calibrating noise to
  sensitivity in private data analysis,'' in \emph{Theory of cryptography
  conference}.\hskip 1em plus 0.5em minus 0.4em\relax Springer, 2006, pp.
  265--284.

\bibitem{dwork2}
C.~Dwork, ``Differential privacy, in automata, languages and programming,''
  \emph{ser. Lecture Notes in Computer Scienc}, vol. 4052, p. 112, 2006.

\bibitem{oech}
Z.~Li, T.~J. Oechtering, and D.~G{\"u}nd{\"u}z, ``Privacy against a hypothesis
  testing adversary,'' \emph{IEEE Transactions on Information Forensics and
  Security}, vol.~14, no.~6, pp. 1567--1581, 2018.

\bibitem{borade}
S.~Borade and L.~Zheng, ``Euclidean information theory,'' in \emph{2008 IEEE
  International Zurich Seminar on Communications}.\hskip 1em plus 0.5em minus
  0.4em\relax IEEE, 2008, pp. 14--17.

\bibitem{huang}
S.-L. Huang and L.~Zheng, ``Linear information coupling problems,'' in
  \emph{2012 IEEE International Symposium on Information Theory
  Proceedings}.\hskip 1em plus 0.5em minus 0.4em\relax IEEE, 2012, pp.
  1029--1033.

\bibitem{huang2}
S.-L. Huang, C.~Suh, and L.~Zheng, ``Euclidean information theory of
  networks,'' \emph{IEEE Transactions on Information Theory}, vol.~61, no.~12,
  pp. 6795--6814, 2015.

\end{thebibliography}
\end{document}